\newtheorem{theorem}{Theorem}
\newtheorem{lemma}{Lemma}
\newtheorem{corollary}{Corollary}
\newtheorem{property}{Property}
\newtheorem{claim}{Claim}
\newcommand{\coloring}{red-blue edge coloring\xspace}
\newcommand{\forestc}{forest coloring\xspace}
\newcommand{\boundedc}[1]{${#1}$-degree coloring\xspace}
\newcommand{\Forestcs}{Forest Colorings\xspace}
\newcommand{\Boundedcs}[1]{${#1}$-Degree Colorings\xspace}
\begin{document}

\title{On Partitioning the Edges of 1-Planar Graphs
\thanks{Research supported in part by the MIUR
    project AMANDA ``Algorithmics for MAssive and Networked DAta'',
    prot. 2012C4E3KT\_001.}}

\author{William J. Lenhart$^1$,
Giuseppe Liotta$^2$,
Fabrizio Montecchiani$^2$
\\[0.1in]
$^1$Department of Computer Science, Williams College, USA\\
\texttt{\small wlenhart@williams.edu}
\\
$^2$Dipartimento di Ingegneria, Universit{\`a} degli Studi di Perugia, Italy\\
\texttt{\small \{giuseppe.liotta,fabrizio.montecchiani\}@unipg.it}
}

\date{}

\maketitle

\begin{abstract}
A 1-plane graph is a graph embedded in the plane such that each edge is crossed at most once. A 1-plane graph is optimal if it has maximum edge density.  A \coloring of an optimal 1-plane graph $G$ partitions the edge set of $G$ into blue edges and red edges such that no two blue edges cross each other and no two red edges cross each other.
We prove the following:
$(i)$ Every optimal 1-plane graph has a \coloring such that the blue subgraph is maximal planar while the red subgraph 
has vertex degree at most four; this bound on the vertex degree is worst-case optimal.
$(ii)$ A \coloring may not always induce a red forest of bounded vertex degree. 
Applications of these results to graph augmentation and graph drawing are also discussed.
\end{abstract}

\begin{keywords}
Edge Partitions, 1-Planar Graphs
\end{keywords}

\section{Introduction}
A well-studied subject in graph algorithms and graph theory is how to color the edges of a graph such that each partition set induces a subgraph with special properties. For example, Colbourn and Elmallah~\cite{ec-pepg+-88} show how to compute a \coloring of a planar graph such that both the red edges and the blue edges form a partial 3-tree. A \coloring of a planar graph that partitions the edge set into partial 2-trees is proved by Kedlaya~\cite{Kedlaya1996238} and by Ding {\em et al.}~\cite{Ding2000221}. Gon{\c{c}}alves~\cite{DBLP:conf/stoc/Goncalves05} shows that every planar graph has a \coloring where each color class is an outerplanar graph and that this coloring can be computed in linear time, thus solving a famous conjecture by Chartrand, Geller, and Hedetniemi~\cite{Chartrand197112}. As for more than two colors, a classic result by Schnyder~\cite{DBLP:conf/soda/Schnyder90} is that the edges of a maximal planar graph can be colored in linear time with three colors so that each partition set forms a spanning tree of the graph, which can be used to obtain a quadratic-area planar straight-line drawing.

In this paper we study {\coloring}s of graphs that are ``almost'' planar, that is graphs for which a drawing where only some types of edge crossings are allowed. Namely, we focus on graphs that admit a drawing such that each edge is crossed by at most another edge; these graphs are called \emph{1-planar} and have been the subject of a rich body of literature partly for their applications in the area of graph drawing (see e.g.~\cite{DBLP:conf/gd/AlamBK13,DBLP:conf/wads/AlamEKPTU15,JGAA-347,DBLP:conf/compgeom/BiedlLM16,DBLP:journals/dam/BorodinKRS01,bdek+-rdicg-15,DBLP:conf/gd/BrandenburgEGGHR12,DBLP:journals/combinatorics/CzapH13,DBLP:journals/ipl/Didimo13,DBLP:journals/tcs/EadesHKLSS13,el-rac1p-DAM13,DBLP:journals/dm/FabriciM07,DBLP:conf/cocoon/HongELP12,pt-gdfce-C97,Suzuki20106,Suzuki2010,t-rdg-JGT88}).
A 1-planar graph having maximum edge density is called \emph{optimal 1-planar}. While 1-planar graphs are NP-hard to recognize~\cite{DBLP:journals/algorithmica/GrigorievB07,DBLP:journals/jgt/KorzhikM13}, optimal 1-planar graphs are recognizable in polynomial time~\cite{DBLP:journals/corr/Brandenburg15,DBLP:journals/jacm/ChenGP02}.

 Our research starts with the following observation: It is known that every optimal 1-planar graph with $n$ vertices has exactly $4n-8$ edges~\cite{bsw,bsw2,pt-gdfce-C97} and it admits a drawing with exactly $n-2$ crossings~\cite{DBLP:journals/combinatorics/CzapH13}. Therefore one can assign the red color to one edge per crossing, color all other edges as blue, and obtain two planar subgraphs. The $3n-6$ blue edges form a maximal planar graph, while the red subgraph consists of $n-2$ edges. Given the sparsity of the red subgraph, we wonder whether it can have bounded vertex degree, i.e., whether the degree of each vertex in the red subgraph can be bounded by a constant that does not depend on $n$. For example, Figure~\ref{fi:example} shows an example where the red subgraph is a forest of two paths:  one is $v_{11},v_2,v_4,v_6,v_9,v_7,$; the other path is $v_8,v_5,v_3,v_1,v_{10},v_{12}$. We recall that Czap and Hud\'ak~\cite{DBLP:journals/combinatorics/CzapH13} proved that every optimal 1-planar graph has a \coloring   such that the blue subgraph is maximal planar and the red subgraph is a forest. Ackerman~\cite{DBLP:journals/dam/Ackerman14} generalized this result to any (non-optimal) 1-planar graph. However, these results do not give bounds on the vertex degree of the red subgraph.

We prove the following theorem, where by plane (1-plane) graph we mean a planar (1-planar) embedding of the graph in the plane.




\begin{theorem}\label{th:main}
An optimal 1-plane graph with $n$ vertices admits a \coloring such that the blue edges induce a maximal plane graph and the red edges induce a 
plane graph having vertex degree at most four. This coloring can be computed in $O(n)$ time and the bound on the vertex degree of the red subgraph is worst-case optimal. Also, for any constant upper bound on the vertex degree, the red subgraph may not be a forest.
\end{theorem}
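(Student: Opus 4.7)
The plan hinges on a structural reduction to a diagonal-selection problem. Every optimal 1-plane graph $G$ decomposes as $G = Q \cup X$, where $Q$ is a bipartite plane quadrangulation on $V(G)$ and $X$ consists of the two crossing diagonals of each quadrilateral face of $Q$; since $Q$ has $n-2$ faces, this accounts for all $4n-8$ edges. Choosing, for each face of $Q$, one diagonal to be red and the other blue yields a \coloring whose blue subgraph is $Q$ together with a triangulating diagonal per face---automatically maximal plane---while the red subgraph $R$ has exactly $n-2$ edges. The task therefore reduces to selecting one diagonal per face so that $R$ has maximum vertex degree at most $4$.

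To prove that such a selection exists I would exploit the bipartition $V(Q) = A \cup B$: each quadrilateral face has two $A$-corners and two $B$-corners, so one of its diagonals joins two $A$-vertices and the other joins two $B$-vertices. Assigning each face a label in $\{A, B\}$ therefore picks its diagonal, and the $R$-degree of $v \in A$ equals the number of incident faces labelled $A$ (symmetrically for $B$). I would formulate this as a capacitated assignment with vertex capacity $4$ and prove feasibility by a Hall / planar max-flow argument, using the fact that $|E(Q)| = 2n-4$ implies an average $Q$-degree strictly below $4$ and hence ample total capacity; the planar structure then permits local rebalancing. A linear-time implementation should follow by processing faces in a canonical order, e.g.\ a BFS of the dual of $Q$, while maintaining residual capacities.

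For the worst-case optimality of the bound $4$, I plan to exhibit an infinite family of optimal 1-plane graphs each containing a vertex $v$ incident to enough faces of $Q$ that a local double-counting argument forces at least four red edges at $v$ in every coloring. A \emph{wheel} gadget centred at a high-degree vertex, extended to arbitrary orders $n$ by attaching further faces outside, should suffice.

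The last claim---that the red subgraph cannot always be both bounded-degree and a forest---is the main obstacle. Here I would construct, for every constant $c$, an optimal 1-plane graph $H_c$ in which any \coloring with red maximum degree at most $c$ contains a red cycle. The underlying mechanism is that, around a high-degree vertex $v$ of $Q$, the diagonals ``not through $v$'' form a natural cycle through the neighbours of $v$; breaking every such cycle costs at least one red incidence at $v$, and by nesting many such obstructions---so that $v$ sits at the centre of $c+1$ independent forbidden cycles---one forces either the red degree at $v$ to exceed $c$ or the survival of one of these cycles in $R$. Designing the nesting so the resulting graph is itself optimal 1-planar is the delicate part of the proof.
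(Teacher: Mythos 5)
Your reduction to a diagonal-selection problem (blue $=Q(G)$ plus one diagonal per face, red $=$ the other diagonal, one per face) is exactly the paper's starting point, but each of the three substantive steps has a genuine gap. The central one is the existence of a degree-$4$ selection. The fact that $|E(Q)|=2n-4$ only bounds the \emph{average} degree; individual vertices of $Q$ can have degree $\Theta(n)$, and the capacitated-assignment/Hall/max-flow formulation you propose is not faithful to the problem: choosing a diagonal consumes capacity at \emph{two} vertices simultaneously, so a flow (even an integral one) can split a face's contribution between endpoints of the two different diagonals, which does not correspond to a valid selection; and you never verify any Hall-type condition, only total capacity. Since the bound $4$ is tight (that is the other half of the theorem), there is no slack for an averaging argument, and ``local rebalancing'' is precisely where all the work lies. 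The paper needs real machinery here: a $2$-page book embedding of $Q(G)$ with the special properties of de Fraysseix, de Mendez and Pach ({\bf p1--p3}), a classification of faces into parachutes and dolphins, a parent/child selection rule on upper parachutes shown (by induction over a forest of binary trees) to produce a forest of paths per page, hence degree at most $2+2=4$, plus separate treatment of dolphins and the outer face. None of this is replaced by your sketch.

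The two impossibility claims are also not carried by your sketches. For the lower bound, a wheel-type gadget centred at a high-degree vertex $v$ forces nothing locally: every face incident to $v$ may select the diagonal of the opposite colour class, giving $v$ red degree $0$, and the ``avoided'' diagonals then form a cycle through the rim in which every vertex has red degree only $2$. The paper instead uses a global counting construction (an $h\times h$ grid with a gray $4$-cycle inserted in each face): each gray $4$-cycle absorbs exactly $6$ units of red degree, and comparing $2m_R=10h^2-16h+4$ with the capacity $6(h-1)^2+kh^2$ forces $k>3$ for $h\ge 5$ --- no single prescribed vertex is forced to degree $4$, only some vertex. For the forest claim, you miss the structural lemma that makes the problem tractable: in any \forestc the red subgraph has $n-2$ edges, its edges are monochromatic diagonals with respect to the black/white bipartition of $Q(G)$, and a counting argument shows it must consist of exactly two trees spanning all vertices (one per colour class). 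This reduces the task to constructing a plane graph with no bounded-degree spanning tree (e.g.\ $k$ disjoint edges all joined to two apices $s,t$), and realizing it as the black-diagonal graph of an optimal 1-plane graph via its angle graph. Your nested-cycles mechanism is left undesigned exactly at the point you yourself flag as delicate (making the construction optimal 1-planar), so as it stands none of the three parts of the theorem is proved.
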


As a byproduct of the techniques in the proof of Theorem~\ref{th:main}, we prove the following result about planar augmentations of quadrangulations. It may be worth recalling that Corollary~\ref{co:main} is stated as a theorem in~\cite{{DBLP:conf/gd/FraysseixM94}} where the proof is however omitted, and that the problem of augmenting a planar graph to a triangulated planar graph by minimizing the maximum vertex degree increase is NP-complete~\cite{DBLP:journals/iandc/KantB97}.

\begin{corollary}\label{co:main}
Every 3-connected plane quadrangulation with $n$ vertices and maximum vertex degree $\Delta$ can be triangulated in $O(n)$ time such that the resulting maximal plane graph has vertex degree at most $\Delta+4$, and this bound is worst-case optimal.
\end{corollary}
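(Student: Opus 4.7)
The plan is to deduce the corollary from Theorem~\ref{th:main} by constructing an appropriate optimal 1-plane graph from the given quadrangulation. Starting from a 3-connected plane quadrangulation $Q$ with $n$ vertices and maximum degree $\Delta$, I would form an optimal 1-plane graph $G$ by inserting both diagonals inside each of the $n-2$ quadrangular faces of $Q$; this yields a simple 1-plane graph with $2n-4 + 2(n-2) = 4n-8$ edges. Applying Theorem~\ref{th:main} to $G$ then produces, in $O(n)$ time, a \coloring in which the blue subgraph $B$ is maximal plane and the red subgraph $R$ has vertex degree at most four.

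The next step is a short structural observation that pins down $B$ precisely. Since every edge of $Q$ is uncrossed in $G$ while the two diagonals inside any common face cross each other, $B$ may contain every edge of $Q$ but at most one diagonal per face, for a total of at most $(2n-4) + (n-2) = 3n-6$ edges. The maximality $|B|=3n-6$ forces equalities everywhere: $B$ contains $E(Q)$ together with exactly one diagonal from each face, and the red set $R$ consists of the other $n-2$ diagonals, again one per face.

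I would then output the triangulation $T := E(Q) \cup R$. Because $R$ provides exactly one diagonal in every quadrangular face of $Q$, $T$ is a plane graph with $3n-6$ edges all of whose faces are triangles, hence a maximal plane triangulation of $Q$. For every vertex $v$ the identity $\deg_T(v) = \deg_Q(v) + \deg_R(v)$ combined with the bound $\deg_R(v)\leq 4$ from Theorem~\ref{th:main} yields $\deg_T(v) \leq \Delta + 4$, as required, and the linear running time is inherited directly from Theorem~\ref{th:main}.

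The main obstacle is the worst-case optimality of the additive term $+4$. Here the tightness statement of Theorem~\ref{th:main} alone does not quite suffice, because that statement only guarantees some vertex of red degree four, which need not be a maximum-degree vertex of the underlying planar skeleton. I would therefore construct a 3-connected plane quadrangulation $Q$---ideally regular or with a sufficiently uniform degree distribution---whose associated optimal 1-plane graph serves as a tightness example for Theorem~\ref{th:main}. In such a $Q$, every choice of one diagonal per face forces some vertex already carrying $Q$-degree $\Delta$ to gain at least four diagonals, so no triangulation of $Q$ can have maximum degree smaller than $\Delta + 4$.
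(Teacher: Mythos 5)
Your upper-bound argument is essentially the paper's: insert the two crossing diagonals in every face of the given quadrangulation $Q$ to obtain an optimal 1-plane graph $G$ with $Q(G)=Q$, apply the \boundedc{4} of Theorem~\ref{th:main} (Lemma~\ref{le:positive}), and triangulate $Q$ with the red edges; then $\deg_T(v)=\deg_Q(v)+\deg_R(v)\leq \Delta+4$ and the running time is inherited. Your counting detour showing that the blue subgraph must be $E(Q)$ plus one diagonal per face is harmless but unnecessary: the algorithm behind Lemma~\ref{le:positive} selects exactly one diagonal per face of $Q(G)$ as red by construction, so $E(Q)\cup R$ is immediately a plane triangulation. (You do implicitly use that the blue maximal plane graph spans all $n$ vertices; this is how the paper reads its own theorem, and it is forced anyway since every vertex of an optimal 1-plane graph has degree at least six while the red degree is at most four.)

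The genuine gap is the worst-case optimality of the additive $+4$. You correctly diagnose why tightness does not transfer automatically from Theorem~\ref{th:main} (a vertex of red degree four need not be a vertex of $Q$-degree $\Delta$), but you then only announce an intention to build a sufficiently regular quadrangulation with the required property; no construction is given and nothing is proved, so this half of the corollary is missing from your write-up. The paper discharges it with the explicit construction of Lemma~\ref{le:worstcase}: an $h\times h$ grid of black vertices with a gray $4$-cycle inserted in each internal face, completed to a 3-connected quadrangulation $Q$. Any triangulation of $Q$ places one diagonal in each of the $5h^2-8h+2$ faces; each gray $4$-cycle absorbs exactly $6$ units of added degree, and a counting/averaging argument over the $h^2$ black vertices shows that some black vertex must receive at least four diagonals. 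Since the maximum-degree vertices of $Q$ are exactly the interior black grid vertices (degree $8$), this is the example witnessing that $\Delta+4$ cannot be improved. To make your proof complete you would need to supply this (or an equivalent) construction together with the counting argument, and also verify the point you yourself raised, namely that the vertex forced to receive four diagonals can be taken to have $Q$-degree $\Delta$ (e.g.\ by localizing the count to the interior grid vertices for large $h$).
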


A second implication of Theorem~\ref{th:main} concerns visibility representations of graphs. Visibility representations are a well-known convention to represent graphs and they have been widely studied in the literature, see e.g.~\cite{DBLP:conf/compgeom/BiedlLM16,DBLP:conf/gd/BoseDHS96,DBLP:conf/cccg/Shermer96,DBLP:conf/stacs/StreinuW03}. Each vertex is represented as an axis-aligned rectangle and each edge as an unobstructed horizontal or vertical line of sight between the corresponding pair of rectangles. Coloring blue and red the vertical and horizontal edges, respectively, we obtain a \coloring of the graph. A visibility representation is \emph{flat} if every rectangle intersects exactly one row~\cite{DBLP:conf/gd/Biedl14}, which implies that every vertex is incident to at most two red edges. The following corollary follows from Theorem~\ref{th:main}.

\begin{corollary}\label{co:visrep}
There exist optimal 1-plane graphs not admitting a flat visibility representation.
\end{corollary}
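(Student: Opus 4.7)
The plan is to deduce the corollary directly from the worst-case-optimality half of Theorem~\ref{th:main}. That theorem supplies an optimal 1-plane graph $G^\star$ for which every valid \coloring has some vertex incident to at least four red edges. The goal is therefore to show that any flat visibility representation of a 1-planar graph induces a valid \coloring whose red subgraph has maximum degree at most two; plugging $G^\star$ into such a representation would then contradict the lower bound of four, forbidding the representation.

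First I would turn a visibility representation into a \coloring by painting every horizontal sight red and every vertical sight blue. Since two horizontal segments do not cross transversally in the plane, and likewise two vertical segments do not, every crossing in the drawing is between edges of different colors, so the partition meets the no-monochromatic-crossing requirement. Because optimal 1-planar graphs admit an essentially unique 1-plane embedding, this partition of the drawing is, combinatorially, a \coloring of $G^\star$ in the sense used by Theorem~\ref{th:main}. Then I would verify the flat red-degree bound: since each rectangle intersects exactly one row, every horizontal sight incident to a rectangle $R$ lies along the unique row of $R$, and along that row $R$ can see at most one unobstructed neighbor on its left and one on its right. Thus every vertex is incident to at most two red edges.

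Combining the two observations at the graph $G^\star$ yields a \coloring of $G^\star$ with maximum red degree at most two, contradicting the worst-case lower bound of four established in Theorem~\ref{th:main} and hence forbidding any flat visibility representation of $G^\star$. The main subtlety I anticipate is the bookkeeping in the first step: the 1-plane embedding induced by the visibility drawing must be recognized as the one for which the worst-case lower bound of Theorem~\ref{th:main} applies, which is where I would invoke the essentially unique 1-plane embedding of optimal 1-planar graphs so that the coloring extracted from the drawing can be transported back to the combinatorial setting of Theorem~\ref{th:main}.
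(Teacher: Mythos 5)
Your overall route is the paper's: take the worst-case graph of Lemma~\ref{le:worstcase}, observe that a flat visibility representation would induce a \coloring whose red (horizontal) subgraph has maximum degree two, and contradict the degree-four lower bound of Theorem~\ref{th:main}. The degree-two part of your argument (at most one left and one right horizontal neighbour per rectangle) is fine and matches what the paper uses.

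The gap is precisely in the step you flag as the main subtlety, and the fix you propose does not work. The lower bound of Lemma~\ref{le:worstcase} is a statement about {\coloring}s with respect to the fixed 1-plane embedding of $G^\star$: its counting argument needs the red subgraph to contain one diagonal of each quadrangular face of $Q(G^\star)$, i.e.\ it needs that the two edges of each crossing pair \emph{of the embedding} receive different colors. Your first observation only shows that no two horizontal (resp.\ vertical) sights cross \emph{in the visibility drawing}; it says nothing about the crossing pairs of the 1-plane embedding. Appealing to an ``essentially unique 1-plane embedding'' of optimal 1-planar graphs cannot bridge this, because a visibility representation is not a 1-plane drawing: a single horizontal sight may be crossed by many vertical sights, and the crossing pairs occurring in the drawing need not coincide with those of any 1-planar embedding, so there is no induced 1-plane embedding to which a uniqueness statement could be applied. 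Nor does abstract planarity of the vertical subgraph suffice: a planar subgraph of $G^\star$ with roughly $3n-6$ edges could a priori contain \emph{both} diagonals of some faces of $Q(G^\star)$, in which case the horizontal subgraph misses those faces and the degree count of Lemma~\ref{le:worstcase} no longer forces a vertex of red degree four. What your argument still needs (and what the paper's terse derivation implicitly assumes when it asserts that coloring horizontal sights red and vertical sights blue yields a \coloring) is a justification that in any flat visibility representation of $G^\star$ the two diagonals of each face of $Q(G^\star)$ cannot both be vertical; without that, the contradiction with Theorem~\ref{th:main} does not go through.
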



The remainder of this paper is organized as follows. Section~\ref{se:preliminaries} contains preliminaries. The proof of Theorem~\ref{th:main} is subdivided into two parts. In Section~\ref{se:decomp-forest} we show that there exist optimal 1-plane graphs not admitting any  \coloring whose red subgraph is a forest of bounded vertex degree. In Section~\ref{se:decomp-bounded} we prove that every optimal 1-plane graph has a \coloring whose red subgraph has vertex degree at most four and this bound is worst-case optimal. In Section~\ref{se:openproblems} we give our final remarks.

\section{Preliminaries}\label{se:preliminaries}

\begin{figure}[t]
\centering
\subfigure[$G$]{\includegraphics[scale=0.5,page=1]{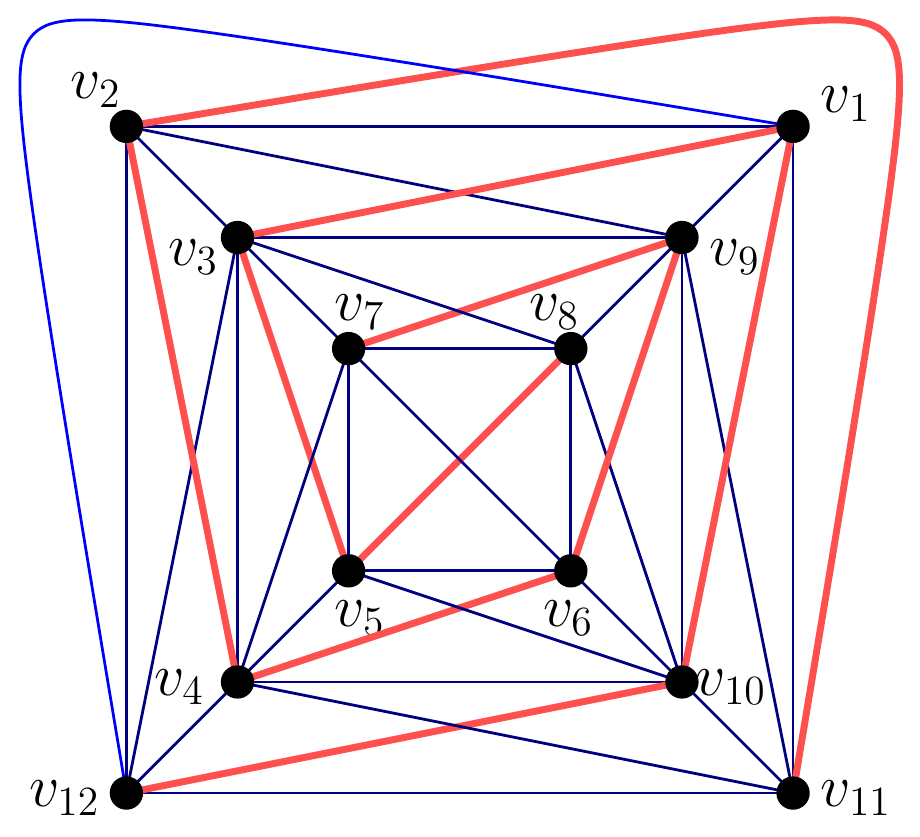}\label{fi:example}}\hfil
\subfigure[$Q(G)$]{\includegraphics[scale=0.5,page=2]{figures/optimal1planar}\label{fi:optimal1planar}}
\caption{(a) An optimal 1-plane graph $G$; the (bold) red edges, one for each crossing, induce two plane paths. (b) The subgraph $Q(G)$ of $G$, which is a 3-connected quadrangulation.}
\end{figure}

{\noindent\bf Planarity and 1-planarity.} A graph $G$ is \emph{simple}, if it contains neither loops nor multiple edges. We consider simple graphs, if not otherwise specified. A \emph{topological graph} is an embedding of a graph in the plane, where the vertices of the graph are distinct points and the edges are Jordan arcs joining the corresponding pairs of points. Also, every pair of arcs share at most one point, which is either a common end-point or an interior point where the two arcs properly cross. A topological graph in which no two edges cross is called a \emph{plane graph}. A graph that admits such a representation is a \emph{planar graph}. Furthermore, a topological graph such that every edge is crossed at most once is called a \emph{1-plane graph}, and a graph that admits such a representation is a \emph{1-planar graph}.

A plane graph divides the plane into topologically connected regions, called \emph{faces}. The unbounded region is called the \emph{outer face}. A 1-plane graph still divides the plane into faces, whose boundary may consist of edge segments between vertices and/or crossing points of edges. The unbounded region is still called the outer face.

Every optimal 1-plane graph $G$ can be obtained from a 3-connected plane quadrangulation $Q(G)$ (i.e., a 3-connected plane graph whose faces are all cycles of length four) by adding two (crossed) edges to each face of $Q(G)$~\cite{Suzuki2010}. Since $Q(G)$ is bipartite, this corresponds to coloring the vertices of $G$ with two colors, \emph{black} and \emph{white}, such that the uncrossed edges of $G$ are between vertices with different colors, while all pairs of crossing edges are between vertices of the same color. Based on this fact, we call a crossing edge having black (white) end-vertices a \emph{black (white) diagonal} of $G$. Figure~\ref{fi:optimal1planar} illustrates the subgraph $Q(G)$ of the optimal 1-plane graph $G$ in Figure~\ref{fi:example}.

\medskip

{\noindent\bf Book embeddings.} A \emph{$k$-page book embedding}, for some integer $k \geq 0$, is a particular representation of a graph $G$, such that: $(i)$ The vertices are restricted to a line, called the \emph{spine}; $(ii)$ The edges are partitioned into $k$ sets, called \emph{pages}, such that edges in a same page are drawn on a half-plane delimited by the spine and do not cross each other.

Let $G$ be an optimal 1-plane graph, and let $Q(G)$ be the plane quadrangulation of $G$. de Fraysseix {\em et al.}~\cite{DBLP:journals/dcg/FraysseixMP95} prove that $Q(G)$ has a $2$-page book embedding $D$  such that the following properties hold. We assume the vertices along the spine of $D$ to be ordered from left to right, and we call the two pages of $D$ the \emph{upper} and the \emph{lower} page.

\begin{itemize}

\item[{\bf p1.}] The leftmost and the rightmost vertices of $D$, denoted by $s_b$ and $t_b$, are both black.

\item[{\bf p2.}] The edges of the upper (lower) page induce a spanning tree of $Q(G) \setminus t_b$ ( $Q(G) \setminus s_b$).

\item[{\bf p3.}] The upper (lower) page contains edges whose leftmost end-vertex is black (white) and whose rightmost end-vertex is white (black).

\end{itemize}
Figure~\ref{fi:2pagebook} shows a $2$-page book embedding which satisfies {\bf p1--p3}.

\section{\Forestcs of Optimal 1-Plane Graphs}\label{se:decomp-forest}

A \coloring such that the blue subgraph is a maximal plane graph and the red subgraph is a forest is called a \emph{\forestc}.
In this section we show that, for any constant $c$, there exist optimal 1-plane graphs for which any \forestc is such that the red subgraph $G_R$ has maximum vertex degree at least $c$, spans all the vertices of the graph, and is composed of exactly two trees. Hence, $G_R$ has neither bounded vertex degree nor bounded size. We first show that $G_R$ spans all the vertices and is composed of two trees.

\begin{lemma}\label{le:2trees}
Let $G$ be an $n$-vertex optimal 1-plane graph with a \forestc. Then the red subgraph has $n$ vertices and it is composed of two trees.
\end{lemma}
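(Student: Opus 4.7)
The plan is to combine a straightforward edge count with the bipartite structure of the quadrangulation $Q(G)$. Since $G$ has $4n-8$ edges and the blue subgraph is maximal plane (hence has $3n-6$ edges), the red subgraph contains exactly $n-2$ edges. If the red forest spans $k$ vertices distributed among $c$ trees, then $k - c = n - 2$; combined with the trivial bound $k \le n$, this immediately gives $c \le 2$. The task therefore reduces to proving $c \ge 2$, from which $c = 2$ and $k = n$ would follow at once.

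To get the matching lower bound on $c$, I would exploit the two-coloring of $V(G)$ induced by $Q(G)$. Every face of $Q(G)$ is a simple $4$-cycle whose four distinct corners alternate between black and white, and its two crossing diagonals connect opposite corners; thus each face contributes exactly one black diagonal (both endpoints black) and one white diagonal (both endpoints white). Since $G$'s diagonals are precisely these monochromatic chords, every red edge has both endpoints of the same color, and the red subgraph is the vertex-disjoint union of its subgraph on black vertices and its subgraph on white vertices. As long as each of these two pieces contains at least one edge, they lie in distinct connected components of the red subgraph, giving $c \ge 2$.

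The only remaining case to rule out is that all $n - 2$ red edges belong to a single color class. If, say, every red edge were black, then $n - 2$ edges would form a forest on the set of black vertices; writing $n_B$ and $n_W$ for the number of black and white vertices, this would force $n - 2 \le n_B - 1$, i.e., $n_W \le 1$. However $Q(G)$ is a $3$-connected quadrangulation, so each of its faces is a simple $4$-cycle contributing two distinct white corners, and hence $n_W \ge 2$. This contradiction, together with the symmetric one for the all-white case, yields $c \ge 2$.

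Combining the two bounds gives $c = 2$ and $k = n$, so the red subgraph spans all $n$ vertices and consists of exactly two trees (one on the black vertices and one on the white vertices). The main obstacle I anticipate is entirely at the level of bookkeeping: one must be careful to invoke the monochromatic-diagonal property from Section~\ref{se:preliminaries} to justify the black/white decomposition, and then to rule out the degenerate all-one-color case via the $3$-connectedness of $Q(G)$. The rest is elementary forest counting.
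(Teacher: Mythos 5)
Your proposal is correct and follows essentially the same route as the paper: red edges are monochromatic diagonals, so black and white red edges lie in distinct components, and the forest edge count $k-c=n-2$ with $k\le n$ then forces exactly two trees spanning all vertices. The only difference is cosmetic: you explicitly rule out the degenerate all-one-color case via $n_W\ge 2$ (resp.\ $n_B\ge 2$), a point the paper's proof asserts more tersely from the existence of at least two vertices of each color.
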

\begin{proof}
Denote by $n$ and $m$ the number of vertices and edges of $G$, respectively. Similarly, denote by $n_R$ and $m_R = n-2$ the number of vertices and edges of the red subgraph $G_R$, respectively. Consider the plane quadrangulation $Q(G)$ of $G$ (see Section~\ref{se:preliminaries}). Recall that the edge set of $G_R$ is composed of one edge for each pair of crossing edges of $G$, i.e., it contains only black and white diagonals of $G$. Since $G$ contains at least two white and at least two black vertices, it follows that $G_R$ contains at least two trees, one induced by the white diagonals and one induced by the black diagonals. Let $T_1,\dots,T_k$ be the $k \geq 2$ trees of $G_R$, each having $n_i$ vertices. Then $m_R=\sum_{i=1}^{k}{(n_i-1)}=\sum_{i=1}^{k}{n_i}-k=n-2$. Since $\sum_{i=1}^{k}{n_i} \leq n$ and $k \geq 2$, we have that $\sum_{i=1}^{k}{n_i}-k=n-2$ holds only for $\sum_{i=1}^{k}{n_i}=n$ and $k=2$.
\end{proof}

\begin{figure}[t]
\centering
\subfigure[$G_b$]{\includegraphics[scale=0.5,page=1]{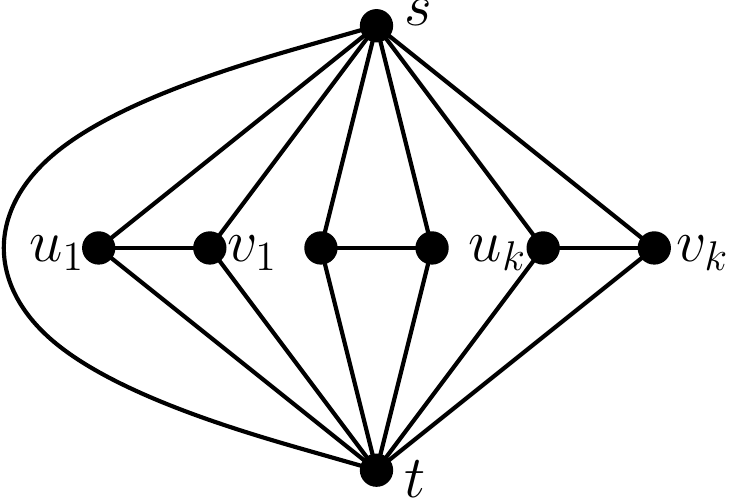}\label{fi:counterexample-1}}\hfil
\subfigure[$G^*_b$]{\includegraphics[scale=0.5,page=2]{figures/counterexample}\label{fi:counterexample-2}}\hfil
\subfigure[$G$]{\includegraphics[scale=0.5,page=3]{figures/counterexample}
\label{fi:counterexample-3}}
\caption{Illustration for the proof of Lemma~\ref{le:negative}.}
\end{figure}

To prove the negative result of Theorem~\ref{th:main}, we first construct a plane graph $G_b$ which does not admit a spanning tree of bounded degree. We then construct an optimal 1-plane graph $G$ such that $G_b$ is the subgraph of $G$ induced by its black (or, equivalently, white) diagonals. By Lemma~\ref{le:2trees}, we know that the red subgraph of any \forestc of $G$ spans all the vertices of $G$ and it is composed of exactly two trees. Thus, one of the two trees spans all the black vertices. Moreover, such a tree is a spanning tree of $G_b$, and thus cannot have bounded degree.

\medskip

Graph $G_b$ is constructed as follows. Start by adding $k$ vertex-disjoint edges $(u_1,v_1)$, $\dots$, $(u_k,v_k)$. Connect all the end-vertices of these edges to two additional vertices $s$ and $t$, and add the edge $(s,t)$; see also Figure~\ref{fi:counterexample-1}. The resulting graph $G_b$ has $n_b = 2k+2$ vertices, $m_b=5k+1$ edges, and, by Euler's formula, $f_b=3k+1$ faces. Any spanning tree $T_b$ of $G_b$ must contain either an edge connected to $s$ or an edge connected to $t$ for each of the edges $(u_1,v_1)$, $\dots$, $(u_k,v_k)$. Thus, either $s$ or $t$ must have degree at least $k/2$ in $T_b$.

Compute now the so-called \emph{angle graph} $G^*_b$ of $G_b$ by adding a (white) vertex $v_f$ inside each face $f$ of $G_b$ and connect $v_f$ to all vertices on the boundary of $f$. Graph $G^*_b$ is a triangulated plane graph by construction with $n^*_b=n_b+f_b=5k+3$ vertices and $m^*_b=3n^*_b-6=15k+3$ edges. Observe that each face of $G^*_b$ contains two black vertices and one white vertex on its boundary; see also Figure~\ref{fi:counterexample-2}. Finally, for every pair of white vertices, $v_f$ and $v_f'$, belonging to two faces that share an edge $e$ of $G_b$, add the edge $(v_f,v_f')$ and make it cross with $e$. The resulting graph $G$ is the desired optimal 1-plane graph; see also Figure~\ref{fi:counterexample-3}. Namely, $G$ is 1-plane by construction, it has $n=n^*_b=5k+3$ vertices, and $m=m^*_b+m_b=15k+3+5k+1=20k+4=4n-8$ edges (i.e., it is optimal), and its black diagonals are all and only the edges of $G_b$. We can summarize this discussion as follows.

\begin{lemma}\label{le:negative}
For any constant $c$, there exists an optimal 1-plane graph such that in any \forestc the maximum
vertex degree of the red subgraph is at least $c$.
\end{lemma}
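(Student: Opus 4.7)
The plan is to use the optimal 1-plane graph $G$ that has just been constructed, and to show that for $k$ large enough relative to $c$, every \forestc of $G$ forces a red vertex of degree at least $c$. The argument rests on two observations about $G$ combined with Lemma~\ref{le:2trees}, and neither observation is deep; the work is in tying them together correctly.

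First, I would establish a purely combinatorial property of $G_b$: every spanning tree $T_b$ of $G_b$ has a vertex of degree at least $\lceil k/2 \rceil$. The point is that each pair $\{u_i,v_i\}$ is connected to the rest of $G_b$ only through $s$ and $t$ (aside from the private edge $(u_i,v_i)$). So for every $i$, $T_b$ must contain at least one edge attaching $\{u_i,v_i\}$ to the remainder of the tree, and that edge must be incident to $s$ or to $t$. Summing over the $k$ pairs produces at least $k$ edges of $T_b$ that are incident to $\{s,t\}$, so one of $s,t$ has degree at least $\lceil k/2 \rceil$ in $T_b$.

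Next, I would argue that in any \forestc of $G$, one of the two red trees guaranteed by Lemma~\ref{le:2trees} is a spanning tree of $G_b$. Red edges are diagonals, and every diagonal of $G$ joins two vertices of the same color, so each connected component of the red subgraph $G_R$ is monochromatic. Lemma~\ref{le:2trees} tells us $G_R$ consists of exactly two trees that together span all $n$ vertices; since $G$ contains both black and white diagonals, these trees must be one all-black tree $T$ and one all-white tree, and $T$ spans every black vertex using only black diagonals. By construction of $G$ the black diagonals are exactly the edges of $G_b$, and so $T$ is a spanning tree of $G_b$.

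Combining the two observations, the red tree $T$ has a vertex of degree at least $\lceil k/2 \rceil$ in $G_R$. Setting $k \geq 2c$ finishes the proof. The only step that requires genuine care is the second one: verifying that the two trees produced by Lemma~\ref{le:2trees} really split cleanly along the black/white color classes rather than mixing. This is the place where the bipartite structure of $Q(G)$, the same-color rule for diagonals, and the equality $n_R=n$ from Lemma~\ref{le:2trees} must all be used together, and is what I anticipate as the only subtle step of the argument.
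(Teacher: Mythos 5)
Your proposal is correct and follows essentially the same route as the paper: the bounded-degree spanning-tree obstruction in $G_b$ (every spanning tree forces degree at least $k/2$ at $s$ or $t$), the realization of $G_b$ as exactly the black diagonals of an optimal 1-plane graph, and the application of Lemma~\ref{le:2trees} to conclude that one red tree is a spanning tree of $G_b$. Your extra remark that the two red trees split into a monochromatic black tree and a monochromatic white tree (because every diagonal joins same-colored vertices and the red subgraph spans all vertices) just makes explicit a step the paper leaves implicit.
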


\section{\Boundedcs{k} of Optimal 1-Plane Graphs}\label{se:decomp-bounded}

\newcommand{\algo}{\texttt{DiagPicker}\xspace}

A \coloring such that the blue subgraph is a maximal plane graph and the red subgraph has vertex degree at most $k$ (for some constant $k$) is called a \emph{\boundedc{k}}. In this Section we conclude the proof of Theorem~\ref{th:main} by showing that, if we drop the acyclicity requirement for the red subgraph, then every optimal 1-plane graph $G$ admits a \boundedc{4}. Also, we exhibit a graph not admitting a \boundedc{k} for $k<4$.

\medskip

\begin{figure}[t]
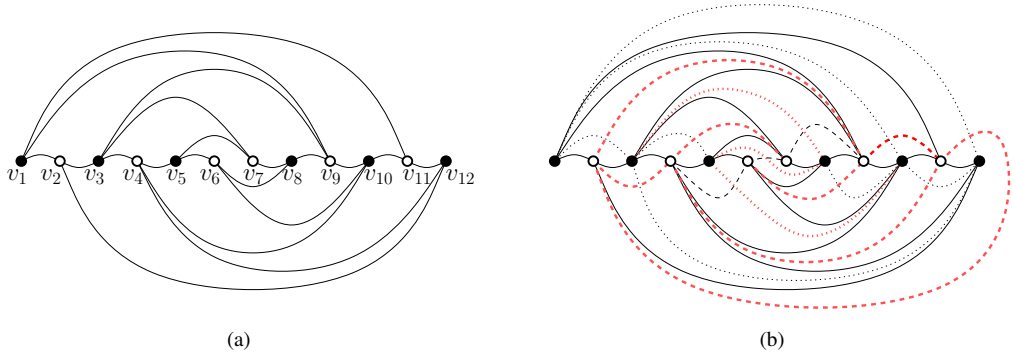

\centering
\subfigure[]{\includegraphics[scale=0.45,page=3]{figures/optimal1planar}\label{fi:2pagebook}}\hfil
\subfigure[]{\includegraphics[scale=0.45,page=4]{figures/optimal1planar}\label{fi:decomposition}}
\caption{(a) A $2$-page book embedding of $Q(G)$ (Figure~\ref{fi:optimal1planar}) computed by the algorithm in~\cite{DBLP:journals/dcg/FraysseixMP95}. (b) A partition of the edges of the optimal 1-plane graph $G$  (Figure~\ref{fi:example}), computed by algorithm \algo. The selected edges are red. }
\end{figure}

We begin by introducing some useful notation. Let $G$ be an optimal 1-plane graph, and let $Q(G)$ be the plane quadrangulation of $G$. Let $D$ be a  $2$-page book embedding of $Q(G)$ computed as in~\cite{DBLP:journals/dcg/FraysseixMP95}. Let $f$ be a face of $Q(G)$, \emph{different} from the outer face. Due to {\bf p3} of Section~\ref{se:preliminaries}, the vertices of $f$ along the spine of $D$ can be neither two whites followed by two blacks nor two blacks followed by two whites, as otherwise there would be a crossing between two edges that are on the same page. Hence, only the following four types of faces are possible in $D$. Denote by \emph{upper parachute} (resp., \emph{lower parachute}) a face  such that its four vertices are an alternated sequence of blacks (resp., whites) and whites (resp., blacks); and denote by \emph{upper dolphin} (resp., \emph{lower dolphin}) a face  such that its four vertices are a black (resp., a white) followed by two whites (resp., two blacks) and a black (resp., a white). For example, in Figure~\ref{fi:2pagebook}, all inner faces are upper and lower parachutes except for the upper dolphin $\{v_5,v_6,v_7,v_8\}$.

\medskip

We now describe an algorithm, \algo, which exploits the properties of $D$ to select one diagonal for each face of $Q(G)$, so that the graph induced by the selected diagonals has maximum vertex degree four. A \coloring such that all selected diagonals are red and all remaining edges are blue yields the desired result.

\medskip

{\bf Step 1.} The first step of the algorithm transforms $D$ as follows. Let $f$ be an upper dolphin, and let $u$ and $v$ be the two consecutive white vertices of $f$ (notice that, there is no other vertex between them). Add a (black) vertex $w$ between $u$ and $v$, and connect $w$ to $u$ and $v$. This operation splits $f$ into two faces, $f_1$ and $f_2$, such that $f_1$ is an upper parachute and $f_2$ is a lower parachute. For lower dolphins we can define an analogous operation. By splitting all upper and lower dolphins in $D$ we obtain a new $2$-page book embedding $D^*$ containing only upper and lower parachutes.


\medskip

\begin{figure}[t]
\centering
\subfigure[]{\includegraphics[scale=0.5,page=1]{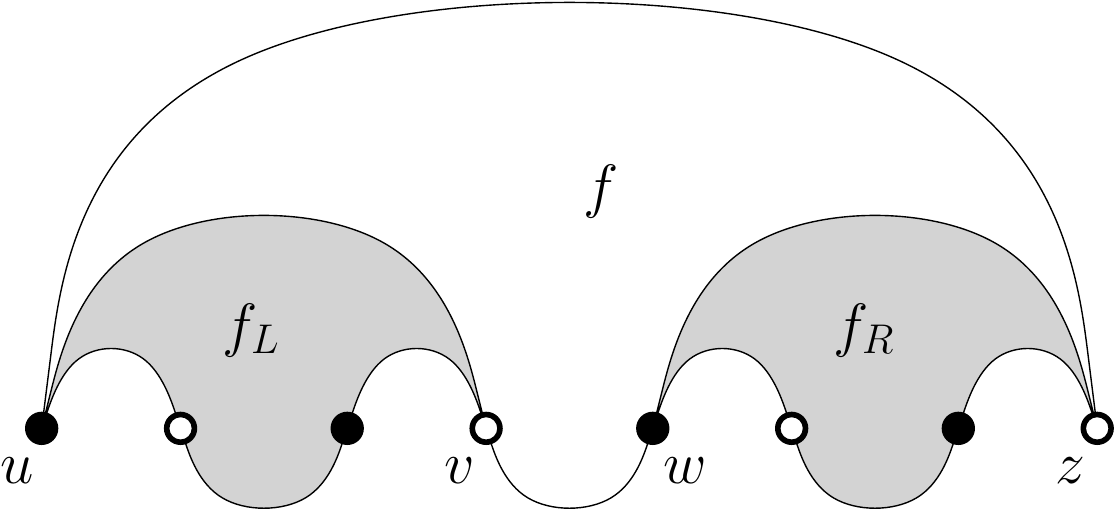}\label{fi:leftrightchild}}\hfil
\subfigure[]{\includegraphics[scale=0.5,page=2]{figures/examples}\label{fi:step3-1}}\hfil
\subfigure[]{\includegraphics[scale=0.5,page=3]{figures/examples}\label{fi:step3-2}}
\caption{(a) Illustration for the definition of left child ($f_L$) and right child ($f_R$) of an upper parachute $f$. (b)-(c) Illustration for step 3: (b) two (parallel) white diagonals have been chosen; (b) two black diagonals have been chosen. }
\end{figure}

{\bf Step 2.} In the second step, \algo selects one diagonal per parachute according to the following rule. We first focus on the upper parachutes, and define some further notation.  Let $f$ be an upper parachute of $D^*$ and let $u,v,w,z$ be its four vertices in the left to right order they appear along the spine. Let $f_L$ be the upper parachute (if any) distinct from $f$ and sharing the edge $(u,v)$ with $f$. We call $f_L$ the \emph{left child} of $f$. Similarly, we call the \emph{right child} of $f$, the upper parachute (if any) $f_R$ sharing the edge $(w,z)$ with $f$. Furthermore, $f$ is called the \emph{parent} of $f_L$ and $f_R$. See also Fig.~\ref{fi:leftrightchild} for an illustration.

Let $f$ be an upper parachute. If $f$ has no parent or $f$ is a left child, then the white diagonal of $f$ is selected. Otherwise, if $f$ is a right child, then its black diagonal is selected; see also Figure~\ref{fi:decomposition} for an example.  Let $S$ be the set of selected edges, one for each upper parachute. We prove the following lemma.

\begin{lemma}\label{le:upperpage}
Let $G_S$ be the graph induced by the edges in $S$. Then $G_S$ is a forest of paths.
\end{lemma}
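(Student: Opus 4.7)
My plan is to partition $G_S$ into the subgraph $G_S^W$ of selected white diagonals and the subgraph $G_S^B$ of selected black diagonals; since these two subgraphs are vertex-disjoint (the former lives on whites, the latter on blacks), it suffices to prove each is a vertex-disjoint union of simple paths. I describe the argument for $G_S^W$; the argument for $G_S^B$ is symmetric, swapping 1st/3rd with 2nd/4th vertices and left with right children.

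The core of the argument is a local analysis at a white vertex $x$. Since at $x$ the upper-page edges go only to blacks on the left and the lower-page edges only to blacks on the right, the cyclic order of edges at $x$ in the planar embedding of $D^*$ divides the upper parachutes incident to $x$ as a white vertex into two groups: the ``transition'' faces between the upper-page and lower-page edge groups (where $x$ is a 2nd vertex), and the ``between-consecutive-upper'' faces (where $x$ is a 4th vertex). A planarity argument based on the nesting of upper-page arcs, combined with the fact that Step~$1$ leaves only parachutes at every transition, would show that: (a) the between-consecutive-upper faces form a chain $Q_1, \ldots, Q_{k-1}$ of iterated right-child relations in the tree of upper parachutes, where $k$ is the number of upper-page edges at $x$, so only the outermost $Q_{k-1}$ fails to be a right child and is therefore the only one selecting a white diagonal; and (b) at most one transition face at $x$ is itself an upper parachute with $x$ as 2nd vertex, and when present it is exactly the parent of $Q_{k-1}$ under the left-child relation. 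In particular $x$ has degree at most $2$ in $G_S^W$, and any two selected white diagonals sharing $x$ come from a parachute and its left child.

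Consequently the connected components of $G_S^W$ are in bijection with the maximal chains $f_0, f_1=(f_0)_L, f_2=(f_1)_L, \ldots$ of iterated left-child relations among white-selected parachutes; each such chain contributes the walk $z_{f_0}, v_{f_0}=z_{f_1}, v_{f_1}=z_{f_2}, \ldots$ in $G_S^W$, whose spine positions strictly decrease since each $v_{f_{i+1}}$ lies strictly between $u_{f_i}$ and $v_{f_i}$, so the walk is a simple path. Distinct chains use disjoint parachutes and, by the characterization of shared vertices above, give vertex-disjoint paths, so $G_S^W$ is a forest of paths. The symmetric argument with right-child chains handles $G_S^B$, and the union is the desired forest of paths.

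The main obstacle I expect is the planarity argument: one must use the nesting structure of upper-page arcs to rule out an additional upper parachute with $x$ as 2nd vertex at the ``opposite'' transition and to verify the chain relations in~(a). Once this is in place, the algorithm's selection rule guarantees the precise chain-to-path correspondence described above.
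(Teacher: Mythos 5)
Your proposal is correct, and it rests on the same geometric facts as the paper's argument, but you organize them differently. The paper proves the lemma by a bottom-up induction over the forest of upper parachutes given by the parent--child relation, maintaining the invariant that the diagonals selected within each subtree induce a forest of paths; its key step is the observation that the left endpoint $v$ of a selected white diagonal of $f$ lies only on $f$ and on the parachutes along the right boundary of $T_{f_L}$, so it meets exactly one previously selected edge, while the right endpoint meets none. Your route dispenses with the induction: you split $S$ by diagonal color into two vertex-disjoint subgraphs, establish a degree bound of two at each white vertex by a rotation (corner) analysis---the wedges between consecutive upper-page arcs at $x$ form a chain in which each inner wedge is the right child of the next, so only the outermost wedge is white-selected, and at most one upper parachute can have $x$ as its second vertex, namely the face whose left child is that outermost wedge---and then obtain simplicity and acyclicity from the strict monotonicity of spine positions along maximal left-child chains. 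The claims (a) and (b) that you defer to a planarity argument are true; I checked that an upper parachute with $x$ as second vertex must occupy the corner between the outermost upper-page arc and the innermost lower-page arc at $x$ (the opposite transition corner can only host a lower parachute), and that in $D^*$ each wedge between consecutive upper-page arcs is an upper parachute with $x$ as fourth vertex. These are exactly the paper's right-boundary observation in disguise, and the paper asserts its version with a comparable level of detail, so this is not a genuine gap. What your organization buys is a more transparent explanation of why each page contributes degree at most two per vertex (the same structural fact that Property~\ref{pr:extremeverts} later exploits), at the cost of carrying out the corner analysis explicitly; the paper's induction instead packages the bookkeeping into one claim verified per tree node.
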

\begin{proof}
To prove the claim, observe first that the parent-child relationship between upper parachutes yields a forest of binary rooted trees, denoted by $\mathcal F$. Namely, if an upper parachute $f$ has $f_L$ and $f_R$ as left and right children, then the corresponding node $n_f$ has $n_{f_L}$ and $n_{f_R}$ as left and right children in a tree of $\mathcal F$. Let $T$ be a tree of $\mathcal F$. Let $n_f$ be a node of $T$ corresponding to an upper parachute $f$ of $D^*$. Let $S_f$ be the subset of $S$ containing only the selected diagonals for the parachutes in the subtree $T_f$ of $T$ rooted at $n_f$. We visit $T$ bottom up, and, for each node $n_f$ of $T$, we show that the following invariant holds: {\bf (I).} The edges in $S_f$ induce a forest of  paths. Let $n_f$ be a leaf of $T$, then Invariant {\bf (I)} clearly holds.

\begin{claim}
Let $n_f$ be a node of $T$. Let $n_{f_L}$ and $n_{f_R}$ be its two children (one of them may not exist), for which Invariant {\bf (I)} holds. Then Invariant {\bf (I)} holds for $n_f$.
\end{claim}
\begin{proof}
Suppose the white diagonal $e=(v,z)$ of $f$ is selected by \algo. Let $v$ be the leftmost end-vertex of $e$. Let the right boundary of $T_{f_L}$ (resp., $T_{f_R}$) be the path obtained starting from $n_{f_L}$ (resp., $n_{f_R}$) and by iteratively adding to the path the right child of the last added node, until the last added node has no right child. Vertex $v$ belongs to $f$ and to the set of upper parachutes that correspond to the nodes on the right boundary of $T_{f_L}$. It follows that $v$ is adjacent to exactly one edge in $S_{f_L}$ (the white diagonal of $f_L$). Similarly, $z$ belongs to $f$ and to the set of upper parachutes that correspond to the nodes on the right boundary of $T_{f_R}$. Hence, $z$ is adjacent to no edge in $S_{f_R}$. Thus, the set $S = S_{f_L} \cup S_{f_R} \cup \{e\}$ satisfies the claim. The argument is symmetric in case \algo selected the black diagonal of $f$.
\end{proof}

To conclude the proof of the lemma, we observe that each tree of $\mathcal F$ is such that the upper parachute corresponding to its root node does not share any vertex with the upper parachutes of any other tree, hence,  $G_S$ is a forest of paths.
\end{proof}

By rotating $D^*$ of an angle $-\pi$, the lower parachutes become upper parachutes, and algorithm \algo can be applied so to select one diagonal per upper parachute in the rotated drawing. It follows that, by Lemma~\ref{le:upperpage}, the final set $S$ of selected edges (one for each upper and lower parachute), induces a graph $G_S$ of maximum vertex degree four.

\medskip

{\bf Step 3.} In the third step, algorithm \algo removes the dummy vertices inserted during Step 1, without increasing the maximum vertex degree of the graph $G_S$. Let $f$ be an upper (resp., lower) dolphin in $D$, and let $f_1$ and $f_2$ be the two parachutes generated by adding a dummy vertex as explained in Step 1. If, for at least one of $f_1$ and $f_2$, a white (resp., black) diagonal $e$ is in $G_S$, then we can keep $e$ in $G_S$ and remove the other diagonal, which can be a parallel white (resp., black) diagonal between the same pair of vertices or a black (resp., white) diagonal. See also Fig.~\ref{fi:step3-1} for an illustration. If for both $f_1$ and $f_2$ the black (resp., white) diagonals are in $G_S$, then we can remove both of them and add the black (resp., white) diagonal of $f$ instead. See also Fig.~\ref{fi:step3-2} for an illustration.

We conclude the description of this step by showing a property of $G_S$ that will be used in the last step. Observe that, so far, the algorithm never selected two diagonals which share an end-vertex $u$ and such that the opposite end-vertices are both to the right or both to the left of $u$. Hence, the following holds. 

\begin{property}\label{pr:extremeverts}
Let $G'$ be the subgraph of $G_S$ induced by a subset of vertices. Then the leftmost and the rightmost vertices of $G'$ have degree at most two in $G'$.
\end{property}

\medskip

{\bf Step 4.} We conclude by showing how \algo chooses one diagonal for the outer face of $Q(G)$ to put in $G_S$ (after this addition Property~\ref{pr:extremeverts} may not hold anymore). 
%
By {\bf p1} of Section~\ref{se:preliminaries}, the leftmost and the rightmost vertices of $D$, $s_b$ and $t_b$, are both black, and they clearly belong to the outer face of $Q(G)$. By Property~\ref{pr:extremeverts}, they both have degree at most two. Thus, we might select the black diagonal that connects $s_b$ and $t_b$, and the maximum vertex degree of $G_S$ would not exceed four. However, the white diagonal of the outer face of $Q(G)$ can be selected as well. Namely, by {\bf p2} of Section~\ref{se:preliminaries}, the first vertex after $s_b$, denoted by $v_2$, and the first vertex before $t_b$, denoted by $v_{n-1}$, are also on the outer face of $Q(G)$, and thus they are both white. Moreover, $v_2$ and $v_{n-1}$ have degree at most three, since $v_2$ (resp., $v_{n-1}$) has at most two incident edges whose opposite end-vertex is to the left (resp., right) by Property~\ref{pr:extremeverts}, and at most one edge whose opposite end-vertex is to the right (resp., left). Hence, we can select the white diagonal connecting $v_2$ and $v_{n-1}$, and the maximum vertex degree of $G_S$ does not exceed four.  Figure~\ref{fi:decomposition} shows the output of \algo applied to the graph in Figure~\ref{fi:example}. 

We can summarize this discussion as follows.

\begin{lemma}\label{le:positive}
Every optimal 1-plane graph $G$ with $n$ vertices admits a \boundedc{4} that can be computed in $O(n)$ time. 
\end{lemma}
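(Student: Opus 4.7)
The plan is to prove the lemma by constructing and analyzing the algorithm \algo whose four steps are sketched in this section, which selects one diagonal of $G$ per face of $Q(G)$ and colors the selected diagonals red and all other edges blue. Since each face of $Q(G)$ contributes exactly two crossing diagonals, selecting one of them per face automatically makes the blue subgraph equal to $Q(G)$ augmented by the $n-2$ unselected diagonals, i.e., a maximal plane graph on $n$ vertices with $3n-6$ edges. The entire content of the proof is therefore the degree bound on the red subgraph $G_S$.

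First I would invoke the $2$-page book embedding $D$ of $Q(G)$ from~\cite{DBLP:journals/dcg/FraysseixMP95}, together with properties \textbf{p1}--\textbf{p3}. Property \textbf{p3} forces every inner face of $D$ to fall into one of only four color/shape patterns along the spine, namely upper or lower parachutes and upper or lower dolphins, which is what makes the selection rule well-defined. I would then apply Step 1 to split every dolphin into one upper and one lower parachute by inserting a dummy vertex, producing a modified embedding $D^*$ in which every inner face is a parachute; this is purely a preprocessing device so that Step~2 only has to reason about parachutes.

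The core is Step 2. For upper parachutes, I would organize them into the forest $\mathcal{F}$ of parent/child relationships via shared top and bottom edges, and apply the rule that a root or left-child parachute contributes its white diagonal while a right-child parachute contributes its black diagonal. Invoking Lemma~\ref{le:upperpage}, which I would prove by a bottom-up induction on $\mathcal{F}$ exactly as in the embedded subclaim, the selected upper diagonals form a forest of paths, so every vertex has degree at most $2$ in this subgraph. Rotating $D^*$ by $\pi$ and repeating produces a symmetric forest of paths on the lower diagonals, again of max degree $2$. Superposing the two gives a red subgraph of max degree at most $4$ on the inner faces.

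Finally, Step 3 removes each dummy vertex: I would verify by the two cases sketched (keep the white/black diagonal if at least one of $f_1,f_2$ contributes one; otherwise replace two same-color diagonals by the single diagonal of the original dolphin $f$) that no real vertex gains red degree, so Property~\ref{pr:extremeverts} is preserved. Step 4 then handles the outer face: Property~\ref{pr:extremeverts} gives that $s_b,t_b$ have red degree at most $2$ and their neighbors $v_2,v_{n-1}$ have red degree at most $3$, so selecting the white outer diagonal $(v_2,v_{n-1})$ keeps everyone at degree at most $4$. Every step is linear time (the book embedding, the splits, the $\mathcal{F}$-induction traversal, and the two cleanup steps), yielding the $O(n)$ bound. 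The step I expect to be most delicate is Step 3: one must check in each merging case that the re-selection both maintains the per-vertex degree bound and preserves the ``extremes have degree at most two'' invariant that Step 4 relies upon, but the structure of the cases, with at most one new diagonal replacing at most two old diagonals on the same vertex set, makes this succeed.
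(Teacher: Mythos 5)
Your proposal is correct and follows essentially the same route as the paper: it reconstructs algorithm \algo step by step (book embedding with \textbf{p1}--\textbf{p3}, dolphin splitting, the parent/child selection rule with Lemma~\ref{le:upperpage} applied to both pages, dummy removal preserving Property~\ref{pr:extremeverts}, and the outer-face choice), exactly as the paper does before summarizing in Lemma~\ref{le:positive}. The linear-time argument also matches the paper's: extract $Q(G)$, compute the book embedding in $O(n)$ time, and process the $O(n)$ faces.
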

\begin{proof}
We already shown that algorithm \algo correctly computes a \boundedc{4} of $G$. In terms of time complexity, the subgraph $Q(G)$ can be extracted from $G$ in $O(n)$ time by only picking the uncrossed edges. A $2$-page book embedding of $Q(G)$ which satisfies {\bf p1--p3} can also be computed in $O(n)$ time~\cite{DBLP:journals/dcg/FraysseixMP95}. Since $Q(G)$ has $O(n)$ faces, algorithm \algo runs in $O(n)$ time. 
\end{proof}

\begin{figure}[t]
\centering
\subfigure[$H$]{\includegraphics[scale=0.5,page=1]{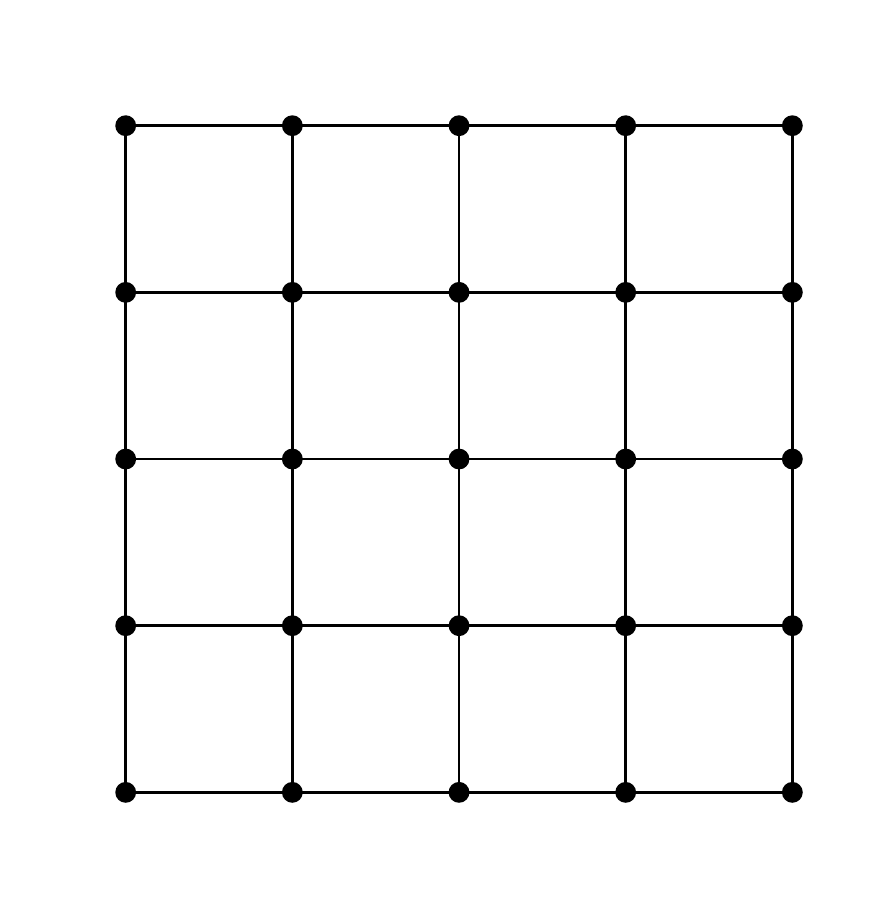}\label{fi:constr3-1}}
\subfigure[$H^*$]{\includegraphics[scale=0.5,page=2]{figures/worstcase}\label{fi:constr3-2}}
\subfigure[$Q$]{\includegraphics[scale=0.5,page=3]{figures/worstcase}\label{fi:constr3-3}}
\caption{Illustration for the proof of Lemma~\ref{le:worstcase}.}
\end{figure}

\medskip


To conclude the proof of Theorem~\ref{th:main}, we shall exhibit an optimal 1-plane graph which does not admit a \boundedc{k} for $k<4$. 
Consider a plane graph $H$ consisting of a $h \times h$ grid, see e.g. Fig.~\ref{fi:constr3-1} for $h=5$. We call \emph{black} the vertices of $H$. Add a 4-cycle of \emph{gray} vertices inside each face of $H$, except for the outer face, and connect each gray vertex to a black vertex without introducing crossings, as shown in Fig.~\ref{fi:constr3-2}. We call $H^*$ the resulting plane graph. Note that $H^*$ is 3-connected and all faces have degree four, except for the outer face. Add edges to the outer face of $H^*$, such that the resulting graph $Q$ is a 3-connected quadrangulation. Since the outer face of $H^*$ has length $4h-4$, this can be done, for example, by iteratively connecting sequences of vertices of length four along the outer face, until such a face consists of four vertices. See for example Fig.~\ref{fi:constr3-3}. Consider now the optimal 1-plane graph $G$ obtained by adding a pair of crossing edges inside every face of $Q$. Graph $Q$ has $n=h^2+4(h-1)^2=5h^2-8h+4$ vertices and, by Euler's formula, $f=n-2=5h^2-8h+2$ faces. Consider any \boundedc{k} of $G$. The red subgraph $G_R$ contains one diagonal for each face of $Q$, i.e., it has $m_R = 5h^2-8h+2$ edges. By the handshaking lemma $\sum_{v \in G_R} deg(v) = 2m_R=10h^2-16h+4$. Since each 4-cycle of gray vertices is inside a 4-cycle of black vertices, one can see that the sum of the degrees of each gray 4-cycle in the red subgraph is exactly $6$. Also, since the degree of the $h^2$ black vertices in the red subgraph is at most $k$ by assumption, and since we have $(h-1)^2$ gray 4-cycles, we have that $\sum_{v \in G_R} deg(v) = 10h^2-16h+4 \leq 6(h-1)^2 + kh^2$, which implies that $k \geq 4-\frac{4}{h}-\frac{2}{h^2} > 3$, for $h \geq 5$, as desired.

\begin{lemma}\label{le:worstcase}
For every $h\geq 5$, there exists an optimal 1-plane graph $G_h$ with $5h^2-8h+4$ vertices, such that $G_h$ does not admit any \boundedc{k} with $k < 4$.
\end{lemma}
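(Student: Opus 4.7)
The plan is to build an explicit optimal 1-plane graph in which many red-degree constraints accumulate inside every cell of a grid, and then derive a contradiction with a handshake count. First I would construct the family $G_h$ exactly as the preceding paragraph sketches: take the $h \times h$ grid $H$ of ``black'' vertices, place a 4-cycle of ``gray'' vertices inside each of its $(h-1)^2$ bounded 4-faces, attach each gray vertex to a distinct black corner of the surrounding face without crossings to obtain $H^\ast$, quadrangulate the outer face to get a 3-connected plane quadrangulation $Q$, and finally insert a pair of crossing diagonals into every face of $Q$; the characterization recalled in Section~\ref{se:preliminaries} guarantees that the resulting $G_h$ is optimal 1-plane, and a direct count gives $|V(G_h)| = h^2 + 4(h-1)^2 = 5h^2-8h+4$.

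Next I would set up a handshake identity for the red subgraph $G_R$ of an arbitrary \boundedc{k}. Since $G_R$ contains exactly one diagonal per face of $Q$ and Euler's formula on $Q$ yields $|F(Q)| = |V(Q)|-2 = 5h^2-8h+2$, the total red degree must equal $\sum_v \deg_R(v) = 10h^2-16h+4$. I would then split this sum between blacks and grays: the $h^2$ blacks together contribute at most $k h^2$ by the \boundedc{k} hypothesis, and the grays will be counted 4-cycle by 4-cycle.

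The heart of the argument, and in my view the only step that is not routine, is the local claim that every gray 4-cycle $C$ contributes exactly $6$ to $\sum_v \deg_R(v)$, independently of the coloring. I would establish this by inspecting the five faces of $Q$ that touch $C$. The unique central face whose boundary is $C$ has both of its diagonals joining two gray vertices and therefore contributes $2$ regardless of which diagonal is red. Each of the four ring faces around $C$ has exactly two grays and two blacks on its boundary, arranged so that either of the two possible diagonals has precisely one gray endpoint; hence each ring face contributes $1$. Summing yields $2 + 4 \cdot 1 = 6$. Since there are $(h-1)^2$ gray 4-cycles and no gray vertex is incident to any other face of $Q$ (in particular, no gray lies on the outer face), the grays contribute in total exactly $6(h-1)^2$.

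Combining the two bounds gives $10h^2-16h+4 \leq 6(h-1)^2 + k h^2$, which rearranges to $k \geq 4 - 4/h - 2/h^2$. This strict lower bound already exceeds $3$ at $h = 5$, forcing $k \geq 4$ and completing the proof.
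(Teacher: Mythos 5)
Your proposal is correct and follows essentially the same route as the paper: the identical grid-plus-gray-4-cycles construction, the handshake count $\sum_v \deg_R(v)=10h^2-16h+4$, the observation that each gray 4-cycle contributes exactly $6$ (which the paper states with ``one can see'' and you justify by the same face-by-face inspection of the central face and the four ring faces), and the resulting bound $k \geq 4-4/h-2/h^2 > 3$ for $h\geq 5$. The only difference is that you spell out the local ``exactly 6'' computation explicitly, which is a welcome elaboration rather than a deviation.
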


Lemmas~\ref{le:negative},~\ref{le:positive} and~\ref{le:worstcase} prove Theorem~\ref{th:main}.

We recall that any 3-connected plane quadrangulation can serve as the subgraph $Q(G)$
of an optimal 1-plane graph $G$~\cite{Suzuki2010}, and that the red edges of a \boundedc{4} of $G$ can be used to triangulate $Q(G)$. We therefore have Corollary~\ref{co:main} as a byproduct of Lemmas~\ref{le:positive} and~\ref{le:worstcase}.

Finally, Lemma~\ref{le:worstcase} implies Corollary~\ref{co:visrep}.

\section{Final Remarks}\label{se:openproblems}
We proved that it is not possible to pick an edge for each pair of crossing edges of an optimal 1-plane graph $G$ such that the graph induced by the selected edges is a forest of bounded degree. Motivated by this negative result, we showed that if we drop the acyclicity requirement, then an edge decomposition into a triangulated plane graph and a 
plane graph of maximum vertex degree four can be computed in linear time, and that this bound on the vertex degree is worst-case optimal. It remains open to establish whether every 1-plane graph admits a \boundedc{k}  for some constant $k$.

\section*{Acknowledgments}
We thank Therese Biedl for useful discussions on visibility representations of non-planar graphs.

{\small \bibliography{optimal1p}}
\bibliographystyle{abbrv}
\end{document}